\pgfplotsset{width=8cm,compat=1.9}
\newtheorem{theorem}{Theorem}
\newtheorem{lemma}{Lemma}
\newtheorem{proposition}{Proposition}
\def\Pr{{\mathbb{P}}}
\newcommand{\Exp}{\mathbb{E}}
\newcommand{\E}{\Exp}
\newcommand{\inv}[1]{\frac{1}{#1}}
\newcommand{\set}[1]{\left\{ #1 \right\}}
\newcommand{\ignore}[1]{}
\newcommand{\opt}{\textsc{opt}\xspace}
\newcommand{\turan}{\textsc{Tur\'an}\xspace}
\newcommand{\cw}{\textsc{CaroWei}\xspace}
\newcommand{\adeg}{\overline{d}}
\newcommand{\maxalg}{\textsc{Max}\xspace}
\newcommand{\mis}{\textsc{MIS}\xspace}
\newcommand{\mopt}{m_{\opt}}
\newcommand{\MyQuote}[1]{\par\vspace*{0.2cm}\addtocounter{equation}{1}%
     \hspace*{0.5cm}\parbox{12cm}{\em #1}\hfill(\arabic{equation})\vspace*{0.2cm}\\}
\title{\bf Simple and Local Independent Set Approximation}
\author{%
Ravi B. Boppana \\
Department of Mathematics \\
MIT, USA \\
{\tt rboppana@mit.edu}
\and
Magn\'us M. Halld\'orsson%
\thanks{Supported by grants nos.~152679-05 and 174484-05 from the Icelandic Research Fund.} \\
School of Computer Science \\
Reykjavik University, Iceland \\
{\tt mmh@ru.is}
\and
Dror Rawitz%
\thanks{Supported by the Israel Science Foundation (grant no.~497/14).} \\
Faculty of Engineering \\
Bar-Ilan University, Israel \\
{\tt dror.rawitz@biu.ac.il}
}
\begin{document}

\maketitle

\begin{abstract}
We bound the performance guarantees that follow from Tur\'an-like
bounds for unweighted and weighted independent sets in bounded-degree
graphs.  In particular, a randomized approach of Boppana forms a
simple 1-round distributed algorithm, as well as a streaming and
preemptive online algorithm.  We show it gives a tight
$(\Delta+1)/2$-approximation in unweighted graphs of maximum degree
$\Delta$, which is best possible for 1-round distributed algorithms.
For weighted graphs, it gives only a $\Delta$-approximation, but a
simple modification results in an asymptotic expected $0.529
\Delta$-approximation.  This compares with a recent, more complex
$\Delta$-approximation~\cite{BCGS17}, which holds deterministically.
\end{abstract}


\section{Introduction}

Independent sets are among the most fundamental graph structures.
A classic result of Tur\'an~\cite{Turan} says that every graph
$G=(V,E)$ contains an independent set of size at least $\turan(G)
\doteq n/(\adeg+1)$, where $n = |V|$ is the number of vertices and
$\adeg = 2|E|/n$ is the average degree.  Tur\'an's bound is tight for
regular graphs, but for non-regular graphs an improved bound was given
independently by Caro~\cite{c79} and Wei~\cite{w81}:
\begin{equation}
\label{eq:carowei}
\alpha(G) \ge \cw(G) \doteq \sum_{v \in V} \frac{1}{d(v)+1}
~,
\end{equation}
where $\alpha(G)$ is the cardinality of a maximum independent set in
$G$ and $d(v)$ is the degree of vertex $v \in V$.

There are numerous proofs of the Caro-Wei bound, some involving simple
greedy algorithms. Arguably the simplest argument known is a
probabilistic one:
\MyQuote{Uniformly randomly permute the vertices, and output the set
  of vertices that precede all their neighbors in the permutation.}
Each node $v$ precedes its neighbors with probability $1/(d(v)+1)$, so
by linearity of expectation the size of the set output matches exactly
$\cw(G)$. This argument, which first appeared in the book of Alon and
Spencer~\cite{as04}, is due to Boppana~\cite{BoppanaTuran}. It clearly
leads to a very simple local decision rule once the permutation is
selected.

An alternative formulation of the algorithm is practical in certain
contexts.
\MyQuote{Each vertex $v$ picks a random real number $x_v$ from
  $[0,1]$. The vertex joins the independent set if its random number
  is larger than that of its neighbors.}
It suffices to select the numbers with precision $1/n^3$, for which
collisions are very unlikely.

This leads to a fully \emph{1-local} algorithm, in which each node
decides whether to join the independent set after a single round of
communication with its neighbors.  The same $O(\log n)$ bits a node
transmits go to all of its neighbors, which matches the
Broadcast-CONGEST model of distributed algorithms. Furthermore, it is
asynchronous.  This is just about the simplest distributed algorithm
one could hope for.

The simplicity of the approach also allows for other applications. The
basic algorithm works well with edge streams, storing only the
permutation and the current solution as a bit-vector.  The storage can
be reduced with an $\epsilon$-min-wise permutation, at a small cost in
performance.  This can also be viewed as a preemptive online
algorithm, where edges can cause nodes to be kicked out of the
solution but never reenter.

\paragraph*{Our contribution.}
The main purpose of this essay is to analyze the performance guarantees
of Boppana's algorithm on graphs of maximum degree $\Delta$. We show
that it achieves a tight $(\Delta+1)/2$-approximation, which then also
gives a bound on the fidelity of the Caro-Wei bound. In terms of the
average degree $\adeg$, the performance is at most $(\adeg+2)/1.657$.
We also show that the Tur\'an bound has strictly worse performance
than the Caro-Wei bound, but asymptotically the same for
bounded-degree graphs or $(\Delta+1)/2 + 1/(8\Delta)$.

We then address the case of weighted graphs, and find that unchanged
Boppana's algorithm gives only a $(\Delta+1)$-approximation. However,
a slight modification yields an improved approximation which
asymptotically approaches $0.529 (\Delta+1)$.


\subsection{Related work}

Tur\'an~\cite{Turan} showed that $\alpha(G) \ge \turan(G)$.
Caro~\cite{c79} and Wei~\cite{w81} independently showed (in
unpublished technical reports) that $\alpha(G) \ge \cw(G)$.
The bound can also be seen to follow from an earlier work of
Erd\H{o}s~\cite{Erdos:Turan-bound}, who showed that the bound is tight
only for disjoint collections of cliques.  Observe that $\cw(G) \ge
\turan(G)$, for every graph $G$.
%

The min-degree greedy algorithm iteratively adds a minimum-degree node
to the graph, removes it and its neighbors and repeats. It achieves
the Caro-Wei bound~\cite{w81} (see also~\cite{Erdos:Turan-bound}).
Griggs~\cite{Griggs} (see also Chv\'atal and
McDiarmid~\cite{ChvatalMcDiarmid}) showed that the max-degree greedy
algorithm also attains the Caro-Wei bound, where the algorithm
iteratively removes the vertex of maximum degree until the graph is an
independent set.  Sakai et al.~\cite{STY03} analyzed three greedy
algorithms for weighted independent sets and showed them to achieve
certain absolute bounds as well as a $(\Delta+1)$-approximation.

The best sequential approximation known is $\tilde{O}(\Delta/\log^2
\Delta)$,%
\footnote{$\tilde{O}(\cdot)$ suppresses $\log\log n$ factors.}
by Bansal et al.~\cite{BGG15}, which uses semi-definite programming.
This matches the inapproximability result known, up to
doubly-logarithmic factors, that holds assuming the Unique Games
Conjecture~\cite{Austrin09}.  The problem is known to be NP-hard to
approximate within an $O(\Delta/\log^4 \Delta)$ factor~\cite{Chan16}.
For small values of $\Delta$, a
$(\Delta+3)/5$-approximation~\cite{BermanFujito99} is achievable
combinatorially, but requires extensive local search.
%
As for simple greedy algorithms, it was shown in \cite{hr94} that the
performance guarantee of the min-degree greedy algorithm is
$(\Delta+2)/3$, and also pointed out that the max-degree algorithm
attains no better than a $(\Delta+1)/2$ ratio.

Most works on distributed algorithms have focused on finding maximal
independent sets, rather than optimizing their size.  Boppana's
algorithm corresponds to the first of $O(\log n)$ rounds of Luby's
maximal independent set algorithm (see also Alon et al.~\cite{abi86}).
As for approximations, $n^{\Theta(1/k)}$-approximation is achievable
and best possible for local algorithms running in $k$
rounds~\cite{bhkk16}, where the upper bound assumes both unlimited
bandwidth and computation.  Recently, Bar-Yehuda et al.~\cite{BCGS17}
gave a $\Delta$-approximation algorithm for weighted independent sets
using the local ratio technique that runs in time $O(\mis \cdot \log
W)$ rounds, where \mis is the number of rounds needed to compute a
maximal independent set and $W$ is the ratio between the largest and
smallest edge weight.  We improve this approximation ratio by nearly a
factor of~2 using only a single round, but at the price of obtaining a
bound only on expected performance.  Alon~\cite{Alon10} gave nearly
tight bounds for testing independence properties; his lower bound
carries over to distributed algorithms, as we shall see in
Sec.~\ref{sec:lb}.
For matchings, which correspond to independent sets in line graphs,
Kuhn et al.~\cite{kmw16} showed that achieving any constant factor
approximation requires $\Omega(\max(\log \Delta/ \log \log \Delta,
\sqrt{\log n / \log \log n}\,))$ rounds.

Halld\'orsson and Konrad~\cite{hk18} examined how well the Caro-Wei
bound performs in different subclasses of graphs.
They also gave a randomized one-round distributed algorithm where
nodes broadcast only a single bit that yields an independent set of
expected size at least $0.24 \cdot \cw(G)$ on every graph $G$.
This is provably the least requirement for an effective distributed
algorithm, as without degree information, the bounds are polynomially
worse.

Streaming algorithms (including Boppana's) achieving Tur\'an-like
bounds in graphs and hypergraphs were considered in~\cite{HHLS15}, and
streaming algorithms for approximating $\cw(G)$ were given recently by
Cormode et al.~\cite{CDK17}.

Motivated by a packet forwarding application, Emek et
al.~\cite{EHMPRR12} considered the online set packing problem that
corresponds to maintaining strong independent sets of large weight in
hypergraphs under edge additions. We give a tight bound on their
method for the special case of graphs.


\section{Performance of Caro-Wei-Tur\'an Bounds}

We examine here how well the Caro-Wei and the Tur\'an bounds perform
on (unweighted) bounded-degree and sparse graphs.

Let \opt be an optimal independent set of size $\alpha = \alpha(G)$
and let $V' = V \setminus \opt$.
We say that a bound $B(G)$ has a performance ratio $f(\Delta)$ if, for
all graphs $G$ with $\Delta(G) = \Delta$ it holds that $\alpha(G) \ge
B(G) \ge \alpha(G)/f(\Delta)$.


\subsection{Caro-Wei in Bounded-Degree Graphs}

\begin{theorem}
$\cw$ has performance ratio $(\Delta+1)/2$.
\end{theorem}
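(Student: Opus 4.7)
My plan is to prove the equivalent inequality $\cw(G) \ge \frac{2\alpha(G)}{\Delta+1}$ by a local charging argument that redistributes the Caro-Wei sum onto the vertices of a maximum independent set. Fix such an $\opt$, set $V' = V \setminus \opt$, and write $d_{\opt}(u) = |N(u) \cap \opt|$. Since $\opt$ is maximum and hence maximal, $d_{\opt}(u) \ge 1$ for every $u \in V'$, so the following redistribution is well-defined: each $v \in \opt$ keeps its own contribution $\frac{1}{d(v)+1}$, while each $u \in V'$ splits its contribution $\frac{1}{d(u)+1}$ equally among its $d_{\opt}(u)$ neighbors in $\opt$. By construction the total charge accumulated on $\opt$ equals $\cw(G)$.

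It then suffices to show that every $v \in \opt$ collects at least $\frac{2}{\Delta+1}$. Since $\opt$ is independent, all $d(v)$ neighbors of $v$ lie in $V'$, so the charge at $v$ equals
$$\frac{1}{d(v)+1} + \sum_{u \in N(v)} \frac{1}{(d(u)+1)\, d_{\opt}(u)} \;\ge\; \frac{1}{d(v)+1} + \frac{d(v)}{(\Delta+1)\Delta},$$
using the crude bounds $d(u) \le \Delta$ and $d_{\opt}(u) \le d(u) \le \Delta$. The remaining work is to verify, for every integer $x \in \{0,1,\ldots,\Delta\}$, the single-variable inequality
$$\frac{1}{x+1} + \frac{x}{(\Delta+1)\Delta} \ge \frac{2}{\Delta+1},$$
which, after clearing denominators, reduces to $(x-\Delta)(x-\Delta+1) \ge 0$. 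Summing over $v \in \opt$ then gives $\cw(G) \ge \frac{2\alpha}{\Delta+1}$, as wanted.

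The main delicate point is the per-vertex inequality: viewed over the reals the quadratic $(x-\Delta)(x-\Delta+1)$ is \emph{negative} on the interval $(\Delta-1,\Delta)$, so the charging only just closes, and it does so because $d(v)$ is an integer. Equality holds exactly at $x\in\{\Delta-1,\Delta\}$, which is why $\Delta$-regular bipartite graphs (in particular, disjoint copies of $K_{\Delta,\Delta}$) will furnish a matching lower bound on the ratio. The hard part of the proof is guessing the right charging; once the scheme is in place, everything else is a small amount of algebra rescued by integrality.
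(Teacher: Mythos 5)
Your proof is correct and is essentially the paper's argument in disaggregated form: your local charging of each $u \in V'$ to its $\opt$-neighbors is exactly the paper's global edge count $\mopt \le \Delta |V'|$ combined with $1/(d(u)+1) \ge 1/(\Delta+1)$, and both routes reduce to the same per-vertex inequality $\frac{1}{x+1} + \frac{x}{\Delta(\Delta+1)} \ge \frac{2}{\Delta+1}$, which is tight only at $x \in \{\Delta-1, \Delta\}$. Your remarks on integrality rescuing the quadratic and on regular bipartite graphs furnishing the matching lower bound likewise coincide with the paper's observations.
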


\begin{proof}
Let $G$ be a graph.
Let $O_i$, for $i = 1, 2, \ldots, \Delta$, denote the number of
vertices in \opt of degree $i$.
Our approach is to separate the contributions of the different $O_i$s
to the Caro-Wei bound.  The nodes of high degree have a smaller direct
contribution, but also have an indirect contribution in forcing more
nodes to be in $V'$.

Let $\mopt$ be the number of edges with an endpoint in \opt.  Each
such edge has the other endpoint in $V'$, whereas
nodes in $V'$ are incident on at most $\Delta$ edges.  Thus,
\begin{equation}
\label{eq:edgecnt}
  \sum_{i=1}^\Delta i \cdot O_i = \mopt \le \Delta |V'| ~. 
\end{equation}
We then obtain
\begin{align*}
\cw(G) = \sum_{v \in V} \frac{1}{d(v)+1} 
& =   \sum_{i=1}^\Delta O_i \cdot \frac{1}{i+1} + \sum_{v \in V'} \frac{1}{d(v)+1} \\
& \ge \sum_{i=1}^\Delta O_i \cdot \frac{1}{i+1} + |V'| \frac{1}{\Delta+1} \\
& \ge \frac{1}{\Delta+1} \sum_{i=1}^\Delta O_i \left( \frac{\Delta+1}{i+1} + \frac{i}{\Delta} \right) 
& & \text{(Applying \eqref{eq:edgecnt})} \\
& = \frac{1}{\Delta+1} \sum_{i=1}^\Delta O_{i} \left( 2 + \frac{\Delta-i}{i+1} - \frac{\Delta-i}{\Delta} \right) \\
& \ge \frac{1}{\Delta+1} \sum_{i=1}^\Delta O_{i} \cdot 2 \\
& = \frac{2}{\Delta+1} \alpha(G)
~,
\end{align*}
obtaining the approximation upper bound claimed.  Observe that the
bound is tight only if all nodes in \opt are of degree $\Delta$ or
$\Delta-1$.

To see that the ratio attained is no better than $(\Delta+1)/2$,
observe that in any regular graph, the algorithm achieves a solution
of exactly $n/(\Delta+1)$, while in bipartite regular graphs the
optimal solution has size $n/2$.
\end{proof}

\paragraph*{Remark.}
Selkow \cite{Selkow94} generalized the Caro-Wei bound by extending
Boppana's algorithm to two rounds. Namely, it adds also the nodes with
no neighbor ordered earlier among those that did not get removed in
the first round. For regular graphs, however, his bound reduces to the
Caro-Wei bound, and thus does not attain a better performance ratio,
given our lower bound construction.


\subsection{Caro-Wei in Sparse Graphs}

We now analyze the performance of the Caro-Wei bound in terms of the
average degree $\adeg = 2|E|/n$.  We shall use a certain application
of the Cauchy-Schwarz inequality, which we state more generality in
hindsight of its application in the following section.

\begin{lemma}
\label{lem:cs}
If $x_1, x_2, \ldots, x_N$ and $w_1, w_2, \ldots, w_N$ are positive reals,
then
\(\displaystyle
\sum_{i=1}^N \frac{w_i^2}{x_i}
\ge \frac{\left(\sum_{i=1}^N w_i\right)^2}{\sum_{i=1}^N x_i}
~.
\)
\end{lemma}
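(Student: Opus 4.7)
The plan is to recognize this as the Engel form (also called Titu's lemma or Sedrakyan's inequality) of the Cauchy--Schwarz inequality, and deduce it directly from the standard Cauchy--Schwarz bound $(\sum u_i v_i)^2 \le (\sum u_i^2)(\sum v_i^2)$ by a well-chosen substitution.

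Specifically, I would set $u_i = w_i/\sqrt{x_i}$ and $v_i = \sqrt{x_i}$, which is well-defined because every $x_i$ is strictly positive. Then $u_i v_i = w_i$, $u_i^2 = w_i^2/x_i$, and $v_i^2 = x_i$, so Cauchy--Schwarz immediately yields
\[
\left(\sum_{i=1}^N w_i\right)^2 \;\le\; \left(\sum_{i=1}^N \frac{w_i^2}{x_i}\right)\left(\sum_{i=1}^N x_i\right).
\]
Since $\sum_i x_i > 0$, dividing both sides by this sum produces the stated bound.

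There is essentially no serious obstacle: the positivity hypothesis both validates the substitution and justifies the final division. If a self-contained derivation is preferred, one can instead argue by induction on $N$, where the base case $N=2$ amounts to $(w_1 x_2 - w_2 x_1)^2 \ge 0$ after clearing denominators, and the inductive step merges two consecutive terms using that base case; alternatively, one can invoke convexity of $t \mapsto t^2$ together with Jensen's inequality applied to the probability distribution $p_i = x_i / \sum_j x_j$ and the values $t_i = w_i/x_i$.
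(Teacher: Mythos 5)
Your proof is correct and matches the paper's argument: the paper applies Cauchy--Schwarz with $u_i = \sqrt{x_i}$ and $v_i = w_i/\sqrt{x_i}$, which is the same substitution as yours up to swapping the roles of $u$ and $v$.
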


\begin{proof}
The Cauchy-Schwarz inequality implies that
for $u_1, u_2, \ldots, u_N$ and $v_1, v_2, \ldots, v_N$,
\[
\left(\sum_{i=1}^N u_i v_i\right)^2
\le \left(\sum_{i=1}^N u_i^2 \right) \left(\sum_{i=1}^N v_i^2\right)
~.
\]
The claim now follows using $u_i = \sqrt{x_i}$ and $v_i = w_i/\sqrt{x_i}$.
\end{proof}

Note that applying Lemma~\ref{lem:cs} with $w_v = 1$ and $x_v =
d(v)+1$ yields that
\[
\cw(G)
= \sum_{v \in V} \frac{1}{d(v)+1}
\geq \frac{n^2}{\sum_v (d(v)+1)}
=    \frac{n}{\adeg + 1}
=    \turan(G)
~.
\]

\begin{theorem}
$\cw$ has performance ratio at most $(\adeg+2)/1.657$.
\label{lem:adeg-ratio}
\end{theorem}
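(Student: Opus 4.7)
My plan is to split the Caro--Wei sum into the contributions from \opt and from $V' = V \setminus \opt$, and lower-bound each half using Lemma~\ref{lem:cs}. Since \opt is independent, each edge has at most one endpoint in \opt, so $\sum_{v \in \opt} d(v) = \mopt$ and $\sum_{v \in V'} d(v) = 2m - \mopt$, where $\mopt$ is the number of edges incident on \opt. Applying Lemma~\ref{lem:cs} with $w_v = 1$ and $x_v = d(v)+1$ to each subset yields
\[
  \cw(G) \;\ge\; \frac{\alpha^2}{\mopt + \alpha} \;+\; \frac{(n-\alpha)^2}{2m - \mopt + n - \alpha}.
\]
The only structural constraint I will need beyond this is $\mopt \le m$, i.e.\ $\mu := \mopt/n \le \adeg/2$, which again uses independence of \opt.

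Writing $\rho = \alpha/n$, the bound becomes $\cw(G)/n \ge h(\mu)$, where
\(
h(\mu) = \rho^2/(\mu+\rho) + (1-\rho)^2/(\adeg - \mu + 1 - \rho)
\)
is convex in $\mu$. A short derivative calculation locates the unconstrained minimum at $\mu^\star = \rho\,\adeg$, where $h$ collapses to $1/(\adeg+1)$, exactly the Tur\'an bound. When $\rho \le 1/2$ this $\mu^\star$ is feasible, and one obtains $\alpha/\cw(G) \le \rho(\adeg+1) \le (\adeg+1)/2$, which is at most $(\adeg+2)/1.657$ by the trivial cross-multiplication $1.657(\adeg+1) \le 2(\adeg+2)$. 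The interesting case is therefore $\rho > 1/2$, where the constrained minimum of $h$ on $[0,\adeg/2]$ is attained at the boundary $\mu = \adeg/2$, corresponding to the extremal situation in which every edge of $G$ crosses between \opt and $V'$. There the bound specializes to
\[
  \frac{\cw(G)}{n} \;\ge\; \frac{\rho^2}{\adeg/2 + \rho} \;+\; \frac{(1-\rho)^2}{\adeg/2 + 1 - \rho}.
\]

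What remains is a single-variable optimization: to show $\alpha/\cw(G) \le (\adeg+2)/1.657$ for all $\rho \in (1/2,1)$ and $\adeg \ge 0$. The heuristic explanation for the constant $1.657$ comes from the $\adeg \to \infty$ limit, where the two fractions approach $(\rho^2 + (1-\rho)^2)/(\adeg/2)$, so that the ratio tends to $\rho\,\adeg/\bigl(2(\rho^2 + (1-\rho)^2)\bigr)$, whose maximum over $\rho$ is attained at $\rho = 1/\sqrt{2}$ and equals $(\sqrt{2}+1)\adeg/4 = \adeg/(4(\sqrt{2}-1)) \approx \adeg/1.657$; the ``$+2$'' in the target is what absorbs the lower-order $\rho$ and $1-\rho$ perturbations in the denominators at finite $\adeg$. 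I expect the main obstacle to be verifying the inequality uniformly in $\adeg$, rather than just in the limit. The cleanest attack is to clear denominators into a bivariate polynomial inequality, maximize over $\rho$ for each fixed $\adeg$, and check that the maximizer $\rho^\star(\adeg)$ moves monotonically toward $1/\sqrt{2}$, so the asymptotic calculation plus the margin from the ``$+2$'' term dominates throughout; if that direct route is messy, a quicker substitute is to seek a one-shot algebraic bound relating $\rho$, $\rho^2 + (1-\rho)^2$, and $\adeg + 2$ that avoids case analysis altogether.
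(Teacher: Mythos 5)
Your derivation up to the reduced single-variable bound is correct and follows essentially the same route as the paper: split the Caro--Wei sum over \opt and $V'$, apply Lemma~\ref{lem:cs} to each part, and argue that the worst case over the edge distribution occurs when every edge crosses from \opt to $V'$ (the paper shows the relevant derivative is nonpositive on the whole interval when $\alpha \ge n/2$; your convexity-plus-location-of-the-unconstrained-minimizer argument reaches the same boundary point $\mopt = m$ and is equally valid). Your handling of the $\rho \le 1/2$ case, where the bound collapses to Tur\'an's and gives ratio at most $(\adeg+1)/2$, is also fine.

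The gap is that you never actually prove the final inequality: you reach
\[
\frac{\cw(G)}{n} \;\ge\; \frac{\rho^2}{\adeg/2+\rho} + \frac{(1-\rho)^2}{\adeg/2+1-\rho},
\]
carry out only the $\adeg \to \infty$ heuristic, and then defer the uniform-in-$\adeg$ verification to a ``clear denominators and check the maximizer moves monotonically'' plan that you do not execute and that you yourself flag as potentially messy. This last step is where the constant $4(\sqrt{2}-1) \approx 1.657$ and the ``$+2$'' actually come from, so the proof is incomplete as written. The missing idea is a one-line simplification: since $\rho \le 1$ and $1-\rho \le 1$, both denominators are at most $\adeg/2+1$, hence
\[
\frac{\cw(G)}{n} \;\ge\; \frac{\rho^2 + (1-\rho)^2}{\adeg/2+1},
\qquad\text{i.e.}\qquad
\cw(G) \;\ge\; \alpha \cdot \frac{\rho + (1-\rho)^2/\rho}{\adeg/2+1}.
\]
This decouples the optimization over $\rho$ from $\adeg$ entirely: $\rho + (1-\rho)^2/\rho$ is minimized at $\rho = 1/\sqrt{2}$ with value $2\sqrt{2}-2$, which immediately yields a performance ratio of at most $(\adeg+2)/(4(\sqrt{2}-1)) \approx (\adeg+2)/1.657$ for every $\adeg$, with no bivariate polynomial manipulation needed. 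With that observation inserted, your argument closes.
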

\begin{proof}
Let $\opt$ be an optimal independent set of size $\alpha = \alpha(G)$
and let $V' = V \setminus \opt$.
Observe that when $|V'| = n - \alpha \ge \alpha$, the Tur\'an bound
gives $n/(\adeg+1) \ge \alpha \cdot 2/(\adeg+1)$, for a performance
ratio of at most $(\adeg+1)/2$.  We assume therefore that $\alpha \ge
\frac{1}{2} n$.

Our approach is to first apply Lemma~\ref{lem:cs} separately on the
parts of $\cw(G)$ corresponding to \opt and $V'$. We then show that
the worst case occurs when all edges cross from \opt to $V'$, indeed
when the graph is bipartite with regular sides.  Optimizing over the
possible sizes of the sides then yields a tight upper and lower
bounds.

Let $\mopt$ denote the number of edges with endpoint in $\opt$,
$m_{V'}$ the number of edges with both endpoints in $V'$ and $m =
\mopt + m_{V'}$ be the total number of edges.  Observe that $\sum_{v
  \in \opt} d(v) = \mopt$ while $\sum_{v \in V'} = \mopt + 2 m_{V'}$.

Lemma~\ref{lem:cs} (with $w_v = 1$ and $x_v = d(v)+1$) applied to \opt
and $V'$ separately yields that
\[
\cw(G)
=   \sum_{v \in \opt} \frac{1}{d(v)+1} + \sum_{v \in V'} \frac{1}{d(v)+1}
\ge \frac{\alpha^2}{\mopt + \alpha} +
    \frac{(n-\alpha)^2}{\mopt + 2 m_{V'} + (n-\alpha)}
~,
\]
%
Denoting $t = \mopt/m$, we get that
\begin{equation}
\cw(G)
\ge \frac{\alpha^2}{t \cdot m + \alpha} + \frac{(n-\alpha)^2}{(2 - t)m + n - \alpha}
~.
\label{eq:post-cs}
\end{equation}
Considered as a function $f$ of $t$, the r.h.s.\ of \eqref{eq:post-cs}
has derivative
\[
\frac{df}{dt}
= - \frac{\alpha^2}{(tm + \alpha)^2} + \frac{(n-\alpha)^2}{((2-t)m + n - \alpha)^2}
~.
\]
Since we assume $\alpha \ge n/2$, it holds that $\alpha^2
(m + n - \alpha)^2 \ge (n-\alpha)^2 (m+\alpha)^2$, and thus $df/dt \le 0$ for all $t
\in [0,1]$.  Hence, denoting $\tau = \alpha/n$, we obtain that
\begin{equation}
\cw(G)
\ge \frac{\alpha^2}{m + \alpha} + \frac{(n-\alpha)^2}{m + n - \alpha}
=   \alpha \left( \frac{\tau}{\adeg/2 + \tau} +
    \frac{(1-\tau)^2/\tau}{\adeg/2+1-\tau} \right)
~.
\label{eq:inv-perf}
\end{equation}
The expression in the parenthesis then upper bounds the reciprocal of
the performance guarantee of $\cw$.

To see that \eqref{eq:inv-perf} is tightest possible, consider
bipartite graphs $G$ with regular sides.  Let $\tau$ be such that
$\tau n$ is the size of the larger side and $q$ is the degree of those
vertices.  Then the number of edges is $m = q \cdot \tau n$, average
degree is $\adeg = 2m/n = 2q\tau$, and the degree of the nodes on the
other side is $m/((1-\tau)n) = \adeg/(2(1-\tau))$.  Clearly
$\alpha(G) = \tau n$, while the Caro-Wei bound gives
\[
\cw(G)
= \frac{\tau n}{\adeg/(2\tau)+1} + \frac{(1 - \tau)n}{\adeg/(2(1-\tau))+1} 
= \alpha(G)
  \left(
    \frac{1}{\adeg/(2\tau)+1} + \frac{(1-\tau)/\tau}{\adeg/(2(1-\tau))+1}
  \right)
~,
\]
which matches \eqref{eq:inv-perf}.

If we round up the lower order terms in the denominator of
\eqref{eq:inv-perf}, we obtain a simpler expression for the asymptotic
performance with $\adeg$:
\[
\cw(G)
\ge \alpha(G) \left( \frac{\tau + (1-\tau)^2/\tau}{\adeg/2 + 1} \right)
~,
\]
which is minimized when $\tau = 1/\sqrt{2}$, for 
a performance ratio at most $(\adeg+2)/(4(\sqrt{2}-1)) \le (\adeg+2)/1.657$.
\end{proof}


\subsection{Tur\'an Bound}

Recall Tur\'an's theorem that $\alpha(G) \ge \turan(G) =
\frac{n}{\adeg + 1} = \frac{n^2}{2|E| + n}$.
%
We find that the guarantee of the Tur\'an bound is strictly weaker than
that of Caro-Wei, yet asymptotically equivalent.

\begin{theorem}
\turan has performance ratio $\displaystyle{\frac{(2 \Delta + 1)^2}{8
    \Delta} = \frac{\Delta + 1}{2} + \frac{1}{8 \Delta} }$.
\end{theorem}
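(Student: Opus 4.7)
The plan is to parametrize by $\tau = \alpha/n$ and the average degree $\adeg = 2|E|/n$, rewriting the performance ratio of $\turan$ in the clean form
\[
\frac{\alpha(G)}{\turan(G)}
= \alpha \cdot \frac{\adeg+1}{n}
= \tau(\adeg + 1)\ .
\]
The whole argument will consist of finding the maximum of this expression subject to the single combinatorial constraint that relates $\adeg$ to $\tau$ when $G$ has maximum degree~$\Delta$.

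For that constraint, let $\opt$ be a maximum independent set and $V' = V \setminus \opt$. Since $\opt$ is independent, every edge has at least one endpoint in $V'$, and every vertex in $V'$ has degree at most $\Delta$. Therefore
\[
2|E| \ =\ \sum_v d(v) \ \le\ \sum_{v \in \opt} d(v) + \sum_{v \in V'} d(v) \ \le\ 2\sum_{v \in V'} d(v)\ \le\ 2\Delta(n - \alpha)\ ,
\]
so $\adeg \le 2\Delta(1-\tau)$. Plugging this in,
\[
\tau(\adeg+1) \ \le\ \tau\bigl(2\Delta(1-\tau) + 1\bigr)\ .
\]
The right-hand side is a concave quadratic in $\tau$; differentiating gives $\tau^\ast = 1/2 + 1/(4\Delta) = (2\Delta+1)/(4\Delta)$, and plugging this back yields the value $(2\Delta+1)^2/(8\Delta)$, which equals $(\Delta+1)/2 + 1/(8\Delta)$ after simplification. (When $\tau \le 1/2$ the bound $\tau(\adeg+1) \le (\Delta+1)/2$ is anyway stronger, so this case is handled for free.)

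For tightness, I would exhibit a bipartite graph with parts $A$ of size $\tau^\ast n$ and $B$ of size $(1 - \tau^\ast)n$, in which $B$ is $\Delta$-regular (so every edge of $B$ goes to $A$, making the degree-sum estimate tight), and such that there is a perfect matching saturating $B$ (guaranteeing $\alpha(G) = |A| = \tau^\ast n$ by K\"onig). A $\Delta$-regular bipartite graph on $B$ with the degrees on the $A$-side averaging $\Delta(1-\tau^\ast)/\tau^\ast < \Delta$ exists for any $\Delta \ge 1$ (e.g., via Hall's theorem or an explicit construction) and achieves equality in every inequality above, confirming $(2\Delta+1)^2/(8\Delta)$ is tight.

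The only step with any subtlety is the tightness construction, since the optimal $\tau^\ast$ is a specific non-integer ratio and one must ensure that a bipartite graph with the prescribed side sizes, one-sided $\Delta$-regularity, and a $B$-saturating matching actually exists for every $\Delta$; this is the part I would handle most carefully. The calculus step and the degree-counting bound are both routine.
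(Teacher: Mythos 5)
Your proof is correct and follows essentially the same route as the paper: the same edge count $|E| \le \Delta(n-\alpha)$ followed by maximizing a concave quadratic in $\alpha$ (the paper packages the optimization as the inequality $4xy \le (x+y)^2$ rather than calculus). For the tightness example, the paper sidesteps the existence question you flag by taking a $\Delta$-regular bipartite graph on two equal parts of size $2\Delta-1$ together with two isolated vertices, which realizes exactly the same parameters $n = 4\Delta$, $|E| = \Delta(2\Delta-1)$, $\alpha = 2\Delta+1$ as your unbalanced one-sided-regular construction.
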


\begin{proof}
Because $\opt = V\setminus V'$ is independent, each of the $|E|$ edges
of~$G$ is incident to at least one vertex in~$V'$.  Conversely, each
vertex in~$V'$ is incident to at most $\Delta$ edges.  So by counting
edges, we get
\[
|E|  \le \Delta |V'| = \Delta(n - \alpha) .
\]
Therefore
\[
2|E| + n \le 2 \Delta (n - \alpha) + n = (2 \Delta + 1) n - 2 \Delta \alpha .
\]
Multiplying by $8 \Delta \alpha$ and using the inequality $4xy \le (x + y)^2$ gives
\[
8 \Delta \alpha (2m + n) 
\le 4 (2 \Delta \alpha) [ (2 \Delta + 1) n - 2 \Delta \alpha ]
\le [(2 \Delta + 1) n]^2
~.
\]
Dividing both sides by $8 \Delta (2m + n)$ gives
\[
\alpha 
\le \frac{(2 \Delta + 1)^2}{8 \Delta} \cdot \frac{n^2}{2m + n}
=   \frac{(2 \Delta + 1)^2}{8 \Delta} \turan(G)
~.
\]

The argument above shows that the performance ratio of Tur\'an's bound
is at most $\frac{(2 \Delta + 1)^2}{8 \Delta}$.  This performance
ratio is tight as a function of~$\Delta$.  To see why, given $\Delta >
0$, let $A$, $B$, and $C$ be disjoint sets of size $2 \Delta - 1$, $2
\Delta - 1$, and $2$, respectively.  Let $G$ be any $\Delta$-regular
bipartite graph with parts $A$ and $B$, together with two isolated
vertices in~$C$.  We can check that $n = 4 \Delta$, $|E| = (2 \Delta -
1) \Delta$, $\turan(G) = \frac{8 \Delta}{2 \Delta + 1}$, and
$\alpha(G) = 2 \Delta + 1$.  So the performance ratio of Tur\'an's
bound on this graph is indeed $\frac{(2 \Delta + 1)^2}{8 \Delta}$.
\end{proof}


\subsection{Limitations of Distributed Algorithms}
\label{sec:lb}

We may assume that we are equipped with unique labels from a universe
of $N$ labels, where $N \ge \Delta \cdot n$. The nodes have knowledge
of $n$, $\Delta$ and $N$, and have unlimited bandwidth and
computational ability.
The nodes have distinct ports for communication with their neighbors,
but do not initially know there labels.

Our result for Boppana's algorithm is optimal for 1-round algorithms.
Observe that the lower bounds below hold also for randomized algorithms.

\begin{theorem}
Every 1-round distributed algorithm has performance ratio at least
$(\Delta+1)/2$, even on unweighted regular graphs.
\end{theorem}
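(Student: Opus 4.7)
The plan is to use a standard indistinguishability argument between two vertex-transitive $\Delta$-regular graphs on the same number of vertices. I would let $G_1$ be a disjoint union of copies of $K_{\Delta,\Delta}$ and $G_2$ a disjoint union of copies of $K_{\Delta+1}$, both on $n = 2\Delta(\Delta+1)k$ vertices for some integer $k$. Both are $\Delta$-regular, with $\alpha(G_1) = n/2$ and $\alpha(G_2) = n/(\Delta+1)$. The crucial observation is that after a single round of communication, each vertex knows only its own label together with the multiset of $\Delta$ neighbor labels, and has no way to tell whether its neighbors are mutually adjacent or not.

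The core step will be to show that if the labels are drawn by a uniformly random injection $V \to [N]$, then the distribution of every vertex's local view---its own label together with the multiset of $\Delta$ neighbor labels---is identical in $G_1$ and $G_2$, namely uniform over pairs consisting of an element of $[N]$ and a disjoint $\Delta$-subset of $[N]$. Hence any 1-round algorithm \alg includes each vertex with exactly the same probability $p$ on $G_1$ and on $G_2$. Applied to $G_2$, the fact that \alg always outputs an independent set, combined with the bound $\alpha(G_2) = n/(\Delta+1)$ on any independent set in a disjoint union of cliques $K_{\Delta+1}$, gives $pn = \Exp[|\alg(G_2)|] \le n/(\Delta+1)$. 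Transferring back to $G_1$ yields $\Exp[|\alg(G_1)|] = pn \le n/(\Delta+1)$, so some labeling of $G_1$ produces a solution of size at most $n/(\Delta+1)$. Compared against $\alpha(G_1) = n/2$, this gives the claimed performance ratio of at least $(\Delta+1)/2$.

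The main technical care will go into the indistinguishability step: I would fix a bijection $\phi : V(G_1) \to V(G_2)$ and argue that the local views of $v$ and $\phi(v)$ are identically distributed, so that any (possibly randomized) decision rule of \alg yields the same inclusion probability on the two graphs. Vertex transitivity of both constructions reduces this to a routine symmetry check, since the label universe satisfies $N \ge \Delta n$ and a vertex together with its closed neighborhood occupies only $\Delta+1$ labels. Randomized algorithms are handled by conditioning on the internal coins, or equivalently by invoking Yao's principle, which reduces the analysis to a worst-case deterministic algorithm on the uniform distribution over labelings.
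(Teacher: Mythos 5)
Your proposal is correct and follows essentially the same route as the paper: an indistinguishability argument between the clique $K_{\Delta+1}$ (forcing inclusion probability at most $1/(\Delta+1)$) and a $\Delta$-regular bipartite graph with independence number $n/2$. Your use of disjoint unions to equalize the vertex counts and the explicit treatment of random labelings are just more careful packaging of the paper's argument.
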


\begin{proof}
In a single round, each node can only learn the labels of their
neighbors and their random bits.

Consider the graph $G_1 = K_{\Delta+1}$, and $G_2$, which is any
$\Delta$-regular bipartite graph.  Distributions over neighborhoods
are identical. Hence, no 1-round algorithm can distinguish between
these graphs.

All nodes will join the independent set with the same probability,
averaged over all possible labelings, since they share the same
views. This probability can be at most $1/(\Delta+1)$, as otherwise
the algorithm would produce incorrect answers on $K_{\Delta+1}$.  The
size of the solution is then at most $n/(\Delta+1)$, while on every
$\Delta$-regular bipartite graphs, the optimal solution contains $n/2$
nodes.
\end{proof}

It is not clear if better results can be obtained when using more rounds.
%
A weaker lower bound holds even for nearly logarithmic number of rounds.

\begin{theorem}
There are positive constants $c_1$ and $c_2$ such that the following
holds: Every $c_1 \log_\Delta n$-round distributed algorithm has
performance ratio at least $c_2 \Delta/\log \Delta$.
\end{theorem}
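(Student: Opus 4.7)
The plan is to adapt Alon's lower bound for testing independence to the distributed setting via an indistinguishability argument over local views. Set $t = c_1 \log_\Delta n$ for a sufficiently small constant $c_1 > 0$. A $t$-round distributed algorithm can only collect information from the $t$-hop neighborhood of each vertex (structure plus labels plus random bits), which has at most $\Delta^t = n^{c_1}$ vertices, comfortably less than $n$.

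The core of the proof is to exhibit two $n$-vertex $\Delta$-regular graphs with the same local structure but very different independence numbers. Using standard probabilistic constructions in the spirit of Erd\H{o}s--Sachs and Bollob\'as, there exists a $\Delta$-regular graph $H_1$ with girth exceeding $2t+1$ and $\alpha(H_1) = \Order(n \log \Delta / \Delta)$. Separately, one can take a $\Delta$-regular \emph{bipartite} graph $H_2$ with girth exceeding $2t+1$, which automatically satisfies $\alpha(H_2) = n/2$. Because the girth exceeds $2t+1$ in each case, the $t$-ball around every vertex in either graph is isomorphic to the same object, namely the $\Delta$-regular tree of depth $t$.

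Next I average over a uniformly random assignment of distinct labels from the universe $\{1, \ldots, N\}$ together with the algorithm's internal random bits. Since every vertex in both $H_1$ and $H_2$ has the same structural $t$-view, the distribution over \emph{labeled} $t$-views is identical for every vertex across both graphs. Hence each vertex is included in the output with the same probability $p$, and the expected solution size is exactly $p n$ in both graphs. Correctness on $H_1$ demands $p n \le \alpha(H_1) = \Order(n \log \Delta / \Delta)$, so $p = \Order(\log \Delta / \Delta)$. On $H_2$ the algorithm outputs only $p n$ vertices in expectation while $\alpha(H_2) = n/2$, yielding a ratio of $\Omega(\Delta / \log \Delta)$; fixing the labeling and random bits that realize this expectation produces a concrete bad instance and handles randomized algorithms without extra work.

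The main obstacle is the joint construction of the two graphs: we need the girth to be $\Omega(\log_\Delta n)$ \emph{simultaneously} with either (i) independence number $\Order(n \log \Delta / \Delta)$ or (ii) a balanced bipartite structure, on $n$ vertices and for every sufficiently large $n$. Both are well-established in the random-regular-graph literature, but the constants must be quantified carefully so that $c_1$ can be chosen small enough for both constructions to go through on graphs of the same order $n$, and so that the $t$-balls on each side of $H_2$ are identical despite the bipartition. Once both graphs are in hand, the symmetry-based indistinguishability argument is essentially routine.
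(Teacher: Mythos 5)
Your proposal is correct and follows essentially the same route as the paper: both rely on Alon's construction of a high-girth $\Delta$-regular graph with independence number $O(n\log\Delta/\Delta)$ paired with a high-girth $\Delta$-regular bipartite graph, and both conclude via indistinguishability of the (labeled) $t$-neighborhoods, which are all copies of the depth-$t$ $\Delta$-regular tree. Your write-up is in fact more explicit than the paper's about the averaging over labels/randomness and the correctness constraint forcing $p = O(\log\Delta/\Delta)$ on the first graph.
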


\begin{proof}
Alon~\cite{Alon10} constructs a $\Delta$-regular graph $G_1$ of girth
$\Omega(\log n/\log \Delta)$ with independence number $O(n/\Delta
\cdot \log \Delta)$, and notes that it is well known that there exist
a bipartite $\Delta$-regular graph $G_2$ of girth $\Omega(\log n/\log
\Delta)$.  The distributions over the $k$-neighborhoods of these
graphs are identical, for $k = O(\log n/\log \Delta)$.  Hence, no
$k$-round distributed algorithm can distinguish between the two.
\end{proof}


\section{Approximations for Weighted Graphs}

In the weighted setting, each node~$v$ is assigned a positive integral
weight $w(v)$ and the objective is to find an independent set $I$
maximizing the total weight $\sum_{v\in I} w(v)$.  For a set $X
\subseteq V$, denote $w(X) = \sum_{x \in X} w(x)$.

Boppana's algorithm can be applied unchanged to weighted graphs,
producing a solution $B$ of expected weight
\[
\E[w(B)] = \sum_{v \in B} w(v) \cdot \frac{1}{d(v)+1}
~,
\]
by linearity of expectation.
This immediately implies that $\E[w(B)] \ge w(V)/(\Delta+1)$, for a
performance ratio at most $\Delta+1$.
To see that this is also the best possible bound, consider the
complete bipartite graphs $K_{N,N}$, where the nodes on one side have
weight 1 and on the other side weight $Q$, for a parameter $Q \ge
\Delta^2$.  The expected weight of the algorithm solution is
$(N+NQ)/(\Delta+1)$, while the optimal solution is of weight $NQ$. The
performance ratio is then $(\Delta+1)/(1 + 1/Q)$, which goes to
$\Delta+1$ as $Q$ gets large.

We therefore turn our attention to modifications that take the weights
into account.


\subsection{Modified algorithm}

We consider now a variation, \maxalg, previously considered in an
online setting in \cite{EHMPRR12}.
\begin{quote}
Each node $v$ picks a random real number $x_v$ uniformly from $[0,1]$.
It broadcasts the values $x_v$ and $w_v$ to its neighbors, who compute
from it $r_v = x_v^{1/w_v}$.  As before, each node $u$ joins the solution 
if its value $r_u$ is the highest among its neighbors.
\end{quote}
The only difference is the computation of $r_v$, which now depends on
the weight $w_v$.  Again the algorithm runs in a single round of
Broadcast-CONGEST, with correctness following as before.
The algorithm was previously shown in \cite{EHMPRR12} to attain a
$\Delta$-approximation.


We obtain a tight bound, which does not have a nice closed expression.

\begin{theorem}
\label{thm:min}
The performance ratio $\rho(\Delta)$ of \maxalg, as a function of $\Delta$, is given by
\[
\frac{1}{\rho}
 = \min_{x \le 1} \left(\frac{x^2}{\Delta+x} + \frac{1}{x\Delta+1}\right)
~.
\]
\end{theorem}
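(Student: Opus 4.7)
The plan is to compute $\E[w(B)]$ in closed form using the joint distribution of the $r_v$'s, bound the sum via Lemma~\ref{lem:cs}, and reduce the worst case to a constrained minimization.

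First, since $r_v = x_v^{1/w_v}$ with $x_v \sim U[0,1]$ has CDF $t^{w_v}$ on $[0,1]$, integrating the event that $r_u$ exceeds $r_v$ for every $v \in N(u)$ gives
\[
\Pr[u \in B] = \int_0^1 w_u\, t^{w_u-1} \prod_{v \in N(u)} t^{w_v}\, dt = \frac{w_u}{W(N[u])},
\]
so $\E[w(B)] = \sum_u w_u^2 / W(N[u])$. Let $a = w(\opt)$, $b = w(V')$ with $V' = V \setminus \opt$, and split the sum over $\opt$ and $V'$. Applying Lemma~\ref{lem:cs} to each piece (with $w_i = w_u$, $x_i = W(N[u])$) yields
\[
\E[w(B)] \ge \frac{a^2}{a + X} + \frac{b^2}{b + Y + Z},
\]
where $X = \sum_{v \in V'} w_v |N(v) \cap \opt|$, $Y = \sum_{v \in V'} w_v |N(v) \cap V'|$ and $Z = \sum_{v \in \opt} w_v d(v)$. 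The degree bound $d(v) \le \Delta$ yields the constraints $X + Y \le \Delta b$ and $Z \le \Delta a$.

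The crux is to minimize this right-hand side over feasible $(X, Y, Z)$. Since the expression is strictly decreasing in each variable, the minimum lies on the boundary $Z = \Delta a$ and $X + Y = \Delta b$. Parametrizing by $X \in [0, \Delta b]$ with $Y = \Delta b - X$ and differentiating in $X$, the unique critical point is $X = \Delta a$, which falls outside the feasible range exactly when $a \ge b$; in that case the minimum is attained at the corner $X = \Delta b$, $Y = 0$, and
\[
\E[w(B)] \ge \frac{a^2}{a + \Delta b} + \frac{b^2}{b + \Delta a} = a \cdot f(b/a),
\]
where $f(x) = x^2/(\Delta + x) + 1/(\Delta x + 1)$. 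Since $b/a \le 1$ in this case, the ratio is at least $\min_{x \le 1} f(x)$. For the residual case $a < b$, applying Lemma~\ref{lem:cs} to the whole sum gives the cruder bound $\E[w(B)] \ge (a+b)/(\Delta+1) > 2a/(\Delta+1) = a f(1) \ge a \min_{x \le 1} f(x)$.

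Finally, for the matching lower bound on $\rho$, I would exhibit the complete bipartite graph $K_{\Delta,\Delta}$ with weights $1$ on one side and $x^*$ (the minimizer in $(0,1)$) on the other; the expected output weight and the optimum are then easily computed, with ratio exactly $f(x^*)$. The main obstacle is the constrained minimization step: the coupling between $X$ and $Y$ through the budget $\Delta b$ forces the case split between $a \ge b$ and $a < b$, and it is there that the specific form $\min_{x \le 1}$ (rather than $\min_{x \ge 0}$) in the theorem statement becomes natural.
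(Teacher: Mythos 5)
Your proposal is correct and follows essentially the same route as the paper: the selection probability $w(v)/w(N[v])$, the Cauchy--Schwarz split over $\opt$ and $V'$, pushing the cross-weight to its extreme under the assumption $w(\opt)\ge w(V')$ (the paper packages this as a monotonicity lemma rather than your convexity/critical-point argument), the crude $(\Delta+1)$ bound for the residual case, and a weighted $\Delta$-regular bipartite tight example. No substantive differences.
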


We prove Theorem~\ref{thm:min} in the following subsection.

If we focus on the asymptotics as $\Delta$ gets large, we can ignore
the additive terms in the denominators, obtaining that the performance
ratio approaches
\[
\rho(\Delta)
\underset{\Delta \to \infty}{\to} (\Delta+1) \cdot \inv{x^2 + 1/x}
~.
\]
This is maximized when $x = 2^{-1/3}$ for a ratio of $2^{2/3}(\Delta+1)/3 \sim
(\Delta+1)/1.89 ~\sim 0.529 (\Delta+1)$.
\begin{theorem}
The asymptotic performance ratio of \maxalg is $2^{2/3}(\Delta+1)/3 \sim
0.529 (\Delta+1)$.
\end{theorem}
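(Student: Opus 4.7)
The plan is to extract the asymptotic behavior directly from the exact formula in Theorem~\ref{thm:min}. Multiplying the reciprocal ratio by $\Delta$ gives
\[
\frac{\Delta}{\rho(\Delta)}
= \min_{x \le 1}\left( \frac{x^2 \Delta}{\Delta+x} + \frac{\Delta}{x\Delta+1} \right),
\]
and for each fixed $x \in (0,1]$ the expression inside the minimum converges, as $\Delta \to \infty$, to $g(x) \doteq x^2 + 1/x$. The crux is to justify interchanging the minimum with the limit so as to conclude $\lim_{\Delta\to\infty} \Delta/\rho(\Delta) = \min_{x \in (0,1]} g(x)$.

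Granted this interchange, the minimization of $g(x) = x^2 + 1/x$ on $(0,1]$ is a one-variable calculus step. Setting $g'(x) = 2x - 1/x^2 = 0$ yields the unique critical point $x^{\star} = 2^{-1/3} \in (0,1]$, and $g''(x) = 2 + 2/x^3 > 0$ confirms it is a minimum. Evaluating, $g(x^{\star}) = 2^{-2/3} + 2^{1/3} = 3 \cdot 2^{-2/3}$, so $\rho(\Delta)/\Delta \to 2^{2/3}/3$. Since $(\Delta+1)/\Delta \to 1$, this is equivalent to $\rho(\Delta) \sim 2^{2/3}(\Delta+1)/3$, matching the stated $0.529(\Delta+1)$ numerically.

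The main obstacle is the exchange of $\min$ and $\lim$, because the convergence of $\Delta/(x\Delta+1)$ to $1/x$ is not uniform near $x = 0$. I would handle this by a sandwich argument. For the upper bound on $\Delta/\rho(\Delta)$, simply evaluate the expression at the single point $x = x^{\star}$ and pass to the limit; this gives $\limsup_\Delta \Delta/\rho(\Delta) \le g(x^{\star})$. For the matching lower bound, fix any $a \in (0, x^{\star})$ with $1/a > g(x^{\star})$: for $x \in (0,a)$, the second summand alone satisfies $\Delta/(x\Delta+1) \ge \Delta/(a\Delta+1) \to 1/a$, so for all sufficiently large $\Delta$ the minimizer in the exact problem must lie in the compact interval $[a,1]$. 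On $[a,1]$ the convergence to $g$ is uniform, hence $\liminf_\Delta \Delta/\rho(\Delta) \ge \min_{[a,1]} g = g(x^{\star})$. Combining the two bounds pins down the limit and yields the claim.
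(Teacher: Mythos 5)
Your proposal is correct and takes essentially the same route as the paper: it starts from the exact expression of Theorem~\ref{thm:min}, drops the lower-order terms in the denominators, and minimizes $g(x)=x^2+1/x$ at $x=2^{-1/3}$ to get $3\cdot 2^{-2/3}$ and hence the ratio $2^{2/3}(\Delta+1)/3$. The only difference is that you rigorously justify the interchange of $\min$ and $\lim$ (via the sandwich/compactness argument near $x=0$), a step the paper glosses over with ``we can ignore the additive terms in the denominators''; this is a welcome tightening rather than a departure.
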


Figure~\ref{fig:rho} shows $\rho(\Delta)/(\Delta+1)$ as a function of
$\Delta$.
For $\Delta=2$, we find that $1/\rho \sim 0.593$, or $\rho \sim
1.657 \sim 0.562(\Delta+1)$, which is about 6\% larger than $0.529(\Delta+1)$,
but 20\% smaller than $\Delta$.
For $\Delta=1$, the algorithm can made optimal by preferring 
nodes with higher weight than their sole neighbor.

\pgfmathdeclarefunction{f}{2}{%
	\pgfmathparse{(#1 * #1)/(#2 + #1) + 1/(#1 * #2 + 1)}%
}
\pgfmathdeclarefunction{g}{1}{%
	\pgfmathparse{1e10}%
	\global\edef\tmpVal{\pgfmathresult}%
	\foreach \i in {0,0.05,...,1} {%
		\pgfmathparse{min(\tmpVal, f(\i,#1))}%
		\global\edef\tmpVal{\pgfmathresult}%
	}%
	\pgfmathparse{\tmpVal}%
}

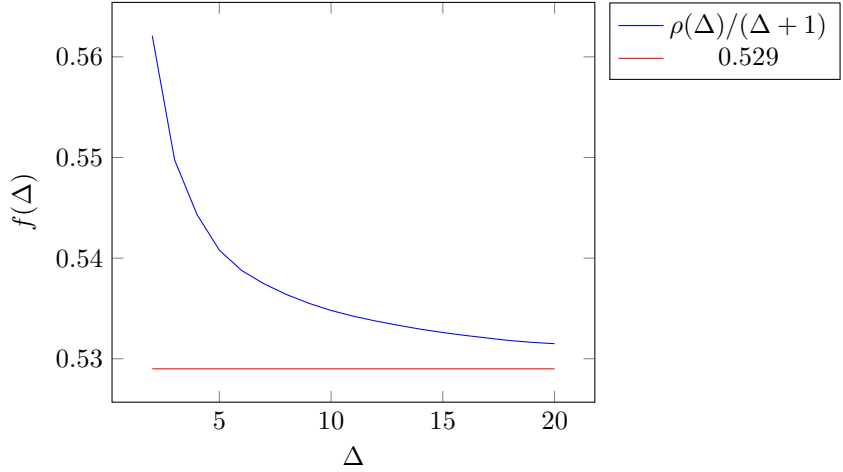
\begin{figure}[t]
\centering
\begin{small}  
\begin{tikzpicture}
\begin{axis}[xlabel={$\Delta$}, ylabel={$f(\Delta)$},
	y tick label style={
		/pgf/number format/fixed,
		/pgf/number format/fixed zerofill,
		/pgf/number format/precision=2 },
        legend pos = outer north east]
	\addplot[blue, samples=19, domain=2:20] { 1/((x+1) * g(x)) };
     	\addlegendentry{$\rho(\Delta)/(\Delta+1)$}
	\addplot[red, samples=19, domain=2:20] { 0.529 ) };
        \addlegendentry{$0.529$}
\end{axis}
\end{tikzpicture}
\end{small}
\caption{Bounds on performance ratio, for small values of $\Delta$.}
\label{fig:rho}
\end{figure}

\subsection{Analysis}

The key property of the \maxalg rule that leads to improved
approximation is that the probability that a node is selected is now
proportional to the fraction of its weight within its closed
neighborhood (consisting of itself and its neighbors). We then obtain
a bound in terms of weights of sets of nodes -- the optimal solution
and the remaining nodes -- using the Cauchy-Schwarz inequality. We
safely upper bound the degree of each node by $\Delta$, but the main
effort then is to show that the worst case occurs when the graph is
bipartite with equal sides. This leads to matching upper and lower
bounds.

Let $N(v)$ denote the set of neighbors of vertex $v$ and $N[v] =
\{v\}\cup N(v)$ its closed neighborhood.
Let \maxalg also refer to the set of nodes selected by \maxalg.

The key property of the \maxalg rule is that the probability that a
node is selected is now proportional to the fraction of its weight
within its closed neighborhood.
We provide a proof for the next lemma for completeness.

\begin{lemma}[\cite{EHMPRR12}]
For each vertex $v \in V$, we have that 
$\displaystyle \Pr[v \in \maxalg] = \frac{w(v)}{w(N[v])} ~.$
\label{lem:ehm}
\end{lemma}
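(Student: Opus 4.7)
The plan is to compute the distribution of $r_v = x_v^{1/w_v}$ directly and then integrate the event that $r_v$ exceeds all neighboring $r_u$. Since this is a clean one-variable calculation, I would carry it out by conditioning on the value of $r_v$ and using independence across vertices.

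First I would observe that for $t \in [0,1]$,
\[
\Pr[r_v \le t] = \Pr[x_v^{1/w_v} \le t] = \Pr[x_v \le t^{w_v}] = t^{w_v},
\]
so $r_v$ has density $w_v\, t^{w_v - 1}$ on $[0,1]$. The crucial point is that exponentiating the uniform variable by $1/w_v$ produces exactly the maximum of $w_v$ i.i.d.\ uniforms in distribution, which is why weights combine additively in the exponent.

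Next, the event $v \in \maxalg$ is the event $r_v > r_u$ for every $u \in N(v)$. Conditioning on $r_v = t$ and using that the $r_u$'s are mutually independent with $\Pr[r_u \le t] = t^{w_u}$, I would write
\[
\Pr[v \in \maxalg]
= \int_0^1 w_v\, t^{w_v - 1} \prod_{u \in N(v)} t^{w_u}\, dt
= \int_0^1 w_v\, t^{w(N[v]) - 1}\, dt
= \frac{w_v}{w(N[v])},
\]
which is the claim.

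There is essentially no obstacle here; the only subtle point is justifying that collisions (ties among the $r_u$'s) occur with probability zero so strict versus non-strict inequalities do not matter, which follows immediately since each $r_u$ is a continuous random variable. The independence of the $x_u$'s (and hence of the $r_u$'s) is given by the algorithm's specification, so the product form under the integral is immediate.
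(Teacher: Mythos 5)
Your proof is correct and follows essentially the same route as the paper: both condition on the value of $r_v$ (with density $w(v)\,t^{w(v)-1}$), use independence to get $\prod_{u\in N(v)} t^{w(u)} = t^{w(N(v))}$ for the neighbors, and evaluate the same integral $\int_0^1 w(v)\, t^{w(N[v])-1}\,dt$. The only cosmetic difference is that the paper first packages the neighbors' values into $r_{\max}$ and identifies its distribution before integrating.
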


\begin{proof}
Let $r_{\max} = \max\{r_u : u \in N(v)\}$.  By independence of the
random choices we have, for $\alpha \in [0,1]$, that
\[
\Pr[r_{\max} < \alpha]
= \prod_{u \in N(v)} \Pr[r_u < \alpha] 
= \prod_{u \in N(v)} \Pr[x_u < \alpha^{w(u)}] 
= \alpha^{\sum_{u \in N(v)} w(u)} 
= \alpha^{w(N(v))}
~.
\]
It follows that $r_{\max}$ has distribution $D_{w(N(v))}$, where the
distribution $D_z$ has density $f_z(\alpha)=z \alpha^{z-1}$, for
$\alpha \in [0,1]$.  Hence,
\[
\Pr[r_v > r_{\max}]
~= \int_0^1 \Pr[r_{\max} < \alpha] \cdot f_{r_v}(\alpha) d\alpha 
~= \int_0^1 \alpha^{w(N(v))} \cdot w(v) \alpha^{w(v)-1} d\alpha
~=~ \frac{w(v)}{w(N[v])}
~,
\]
as required.
\end{proof}

%


Note that by Lemma~\ref{lem:ehm} and linearity of expectation, we have
that
\begin{equation}
\E[w(S \cap \maxalg)]
= \sum_{v \in S} \Pr[v \in \maxalg] \cdot w(v)
= \sum_{v \in S} \frac{w(v)^2}{w(N[v])}
~,
\label{eq:minsol}
\end{equation}
for any subset $S \subseteq V$.  
Applying Lemma~\ref{lem:cs} (with $x_v = w(N[v])$) gives:
\begin{lemma}
\label{lem:subset}
For any subset $S \subseteq V$ we have that
\(\displaystyle
\E[w(S \cap \maxalg)] \geq \frac{w(S)^2}{\sum_{v \in S} w(N[v])}
~.
\)
\end{lemma}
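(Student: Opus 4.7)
The plan is essentially to combine two ingredients that are already on the table: the exact expression for $\E[w(S\cap\maxalg)]$ from equation \eqref{eq:minsol}, and the Cauchy-Schwarz-type inequality of Lemma~\ref{lem:cs}. Specifically, I would start from
\[
\E[w(S \cap \maxalg)] = \sum_{v \in S} \frac{w(v)^2}{w(N[v])},
\]
which holds by Lemma~\ref{lem:ehm} and linearity of expectation (this is equation \eqref{eq:minsol} restricted to $S$). The right-hand side is already in the form $\sum_i w_i^2 / x_i$ that appears in Lemma~\ref{lem:cs}, so the rest is simply a matter of reading off the substitution.

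Concretely, I would index the sum by the vertices of $S$ and set, in the notation of Lemma~\ref{lem:cs}, $w_v := w(v)$ and $x_v := w(N[v])$. Both are positive, so the hypothesis of the lemma is met. Applying it gives
\[
\sum_{v \in S} \frac{w(v)^2}{w(N[v])}
\;\ge\; \frac{\bigl(\sum_{v \in S} w(v)\bigr)^2}{\sum_{v \in S} w(N[v])}
\;=\; \frac{w(S)^2}{\sum_{v \in S} w(N[v])},
\]
using the definition $w(S)=\sum_{v\in S}w(v)$. Chaining this with the identity from \eqref{eq:minsol} yields exactly the claimed inequality.

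There is no genuine obstacle here: the entire content of the lemma is the observation that Cauchy-Schwarz packages the per-vertex contribution $w(v)^2/w(N[v])$ into a single ratio involving the total weight of $S$ and the total closed-neighborhood weight summed over $S$. If anything, the only point worth flagging is that the substitution uses $w_v = w(v)$ rather than the uniform weights $w_v=1$ used in the earlier derivation of $\cw(G) \ge \turan(G)$; that is what causes $w(S)^2$ to appear in the numerator rather than $|S|^2$. The proof is therefore a one-liner once the right substitution in Lemma~\ref{lem:cs} is identified.
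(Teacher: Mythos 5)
Your proof is correct and follows exactly the paper's route: it invokes the identity \eqref{eq:minsol} from Lemma~\ref{lem:ehm} and linearity of expectation, then applies Lemma~\ref{lem:cs} with $w_v = w(v)$ and $x_v = w(N[v])$, which is precisely the substitution the paper indicates. Nothing is missing.
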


Applying Lemma~\ref{lem:subset} with $S=V$ gives an absolute lower
bound on the solution size.

\begin{lemma}
\label{lemma:calS}
$\displaystyle \E[w(\maxalg)] 
\geq \frac{w(V)^2}{\sum_{v \in V} w(N[v])}
=    \frac{w(V)^2}{\sum_{v \in V} (d(v)+1) w(v)}
\geq \frac{w(V)}{\Delta+1} ~. $
\end{lemma}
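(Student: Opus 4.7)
The plan is to verify the three-part chain in order, each of which is a short consequence of already-established facts.

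For the first inequality $\E[w(\maxalg)] \geq w(V)^2 / \sum_{v\in V} w(N[v])$, I would simply instantiate Lemma~\ref{lem:subset} with $S = V$. Nothing more is needed; this is precisely the statement of that lemma specialized to the full vertex set.

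For the middle equality, I would rewrite $\sum_{v \in V} w(N[v])$ by swapping the order of summation. Expanding gives $\sum_{v \in V} \sum_{u \in N[v]} w(u)$, and each vertex $u$ contributes $w(u)$ once for every $v$ with $u \in N[v]$, i.e.\ once for every $v \in N[u]$. Since $|N[u]| = d(u)+1$, we obtain $\sum_{v \in V} w(N[v]) = \sum_{u \in V} (d(u)+1)\,w(u)$, which is the desired identity.

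For the final inequality, the bound $d(v) \le \Delta$ yields $(d(v)+1) \le \Delta+1$ for every $v \in V$, so $\sum_{v \in V}(d(v)+1)w(v) \le (\Delta+1)\sum_{v\in V} w(v) = (\Delta+1)\,w(V)$. Substituting into the denominator gives $w(V)^2 / \sum_{v\in V}(d(v)+1)w(v) \ge w(V)/(\Delta+1)$, completing the chain.

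There is no real obstacle here: the heart of the argument was already absorbed into Lemma~\ref{lem:subset} (via the Cauchy–Schwarz bound of Lemma~\ref{lem:cs}) and into Lemma~\ref{lem:ehm}. The only new content is the handshake-style identity for $\sum_v w(N[v])$ and the crude but tight $\Delta+1$ bound on closed-neighborhood sizes, both of which are routine.
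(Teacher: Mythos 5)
Your proposal matches the paper's argument exactly: the paper likewise obtains the first inequality by applying Lemma~\ref{lem:subset} with $S=V$, asserts the middle identity $\sum_{v} w(N[v]) = \sum_{v}(d(v)+1)w(v)$ (which you justify a bit more explicitly by swapping the order of summation), and concludes with $d(v)\le\Delta$. The proof is correct and takes essentially the same route.
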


We need the following lemma when showing that worst case occurs for bipartite graphs.

\begin{lemma}
\label{lemma:technical}
Let $a > b > 0$ and let $Z - Y \geq X > 0$.  Then
\[
\min_{t \in [0,1]} \set{\frac{a}{Y + tX} + \frac{b}{Z+(1-t)X}}
= \frac{a}{Y+X} + \frac{b}{Z}
~.
\] 
\end{lemma}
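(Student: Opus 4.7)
The plan is to treat the right-hand side as a function of the continuous parameter $t$ and show it is strictly decreasing on $[0,1]$, so that the minimum is attained at the right endpoint $t=1$, which immediately yields the claimed value $\frac{a}{Y+X} + \frac{b}{Z}$.

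Concretely, set $f(t) = \frac{a}{Y+tX} + \frac{b}{Z+(1-t)X}$ and differentiate:
\[
f'(t) = X\left(-\frac{a}{(Y+tX)^2} + \frac{b}{(Z+(1-t)X)^2}\right).
\]
So $f'(t)<0$ is equivalent to $a\,(Z+(1-t)X)^2 > b\,(Y+tX)^2$. I would establish this pointwise on $[0,1]$ in two steps. First, the hypothesis $Z-Y\ge X$ implies $Y+tX \le Z+(1-t)X$ for every $t\in[0,1]$, since this rearranges to $(2t-1)X \le Z-Y$ and the left side is at most $X$. Thus both denominators are positive and the denominator on the left dominates the one on the right (the only mild assumption I would note is that $Y \ge 0$, which is the relevant regime for the application, where these quantities represent weights; otherwise one just needs $Y + tX > 0$). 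Second, since $a>b>0$, combining these inequalities yields $a(Z+(1-t)X)^2 > b(Y+tX)^2$, hence $f'(t)<0$ throughout $[0,1]$.

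Monotonicity then forces the minimum to occur at $t=1$, and plugging in gives
\[
f(1) = \frac{a}{Y+X} + \frac{b}{Z},
\]
exactly as claimed.

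There is no real obstacle: the argument is a one-variable calculus check. The only subtlety worth highlighting is why the specific hypothesis $Z-Y\ge X$ is the \emph{right} condition — it is precisely what is needed to guarantee $Y+tX\le Z+(1-t)X$ for every $t\in[0,1]$, which in turn makes the direction of the inequality $a>b$ propagate correctly through the squared denominators. The asymmetry $a>b$ (heavier coefficient paired with the smaller base) together with $Z-Y\ge X$ (smaller base gets the full $X$ added) is what makes $t=1$ optimal rather than $t=0$ or an interior critical point.
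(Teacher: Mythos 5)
Your proof is correct and is essentially identical to the paper's: both differentiate $f(t)$, observe that $a>b$ together with $Y+tX\le Z+(1-t)X$ (a consequence of $Z-Y\ge X$) makes $f'(t)<0$ on $[0,1]$, and conclude the minimum is at $t=1$. You supply slightly more detail than the paper (the rearrangement $(2t-1)X\le Z-Y$ and the remark that $Y+tX>0$ is implicitly needed), but the argument is the same.
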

\begin{proof}
Let $f(t) = \frac{a}{Y + tX} + \frac{b}{Z+(1-t)X}$.  We have that
\(
\frac{df(t)}{dt}
= -\frac{aX}{(Y + tX)^2} + \frac{bX}{(Z + (1-t)X)^2}
\),
which is negative for any $t \in [0,1]$, since $a > b$ and $Y + tX \le Z + (1-t)X$.
\end{proof}

Now we are ready to prove Theorem~\ref{thm:min}.

\begin{proof}[Proof of Theorem~\ref{thm:min}]
Let \opt be an optimal solution, and define $V' \doteq V \setminus
\opt$, and $\beta \doteq w(V')/w(\opt)$.
When $\beta \ge 1$, Lemma~\ref{lemma:calS} implies that the
performance ratio is at most $(\Delta+1)/2$.  We therefore focus on
the case where $\beta < 1$.

We first apply Lemma~\ref{lem:subset} separately on \opt and on $V'$,
obtaining:
\begin{equation}
w(\maxalg)
=   w(\maxalg \cap \opt)) + w(\maxalg \cap V')
\ge \frac{w(\opt)^2}{\sum_{v \in \opt} w(N[v])} +
    \frac{w(V')^2}{\sum_{v \in V'} w(N[v])}
~.
\label{eq:alg}
\end{equation}

Let $W = \sum_{v \in V'} w(v) \cdot |N(v) \cap \opt| = \sum_{v \in
  \opt} w(N(v))$ be the weighted degree of the nodes of $V'$ into
$\opt$, which can be viewed as the total of the weights of
neighborhoods of nodes in $\opt$.
Thus, 
\begin{equation}
\sum_{v \in \opt} w(N[v])
= w(\opt) + \sum_{v \in V'} w(v) |N(v) \cap \opt|
= w(\opt) + W
~.
\label{eq:opt1}
\end{equation}
and
\begin{align}
\sum_{v \in V'} w(N[v])
& =    w(V') + \sum_{v \in V'} w(N(v)) \nonumber \\
& =    w(V') + \sum_{v \in \opt} w(v) \cdot |N(v) \cap V'|
             + \sum_{v \in V'} w(v) \cdot |N(v) \cap V'| \nonumber \\
& \leq w(V') + \Delta w(\opt) 
             + \sum_{v \in V'} w(v) \cdot (\Delta - |N(v) \cap \opt|) \nonumber  \\
& =    w(V') + \Delta w(\opt) + \Delta w(V') - W \label{eq:opt2}
~.
\end{align}

Applying \eqref{eq:opt1} and \eqref{eq:opt2} to \eqref{eq:alg} gives
\[
w(\maxalg)
\ge \frac{w(\opt)^2}{w(\opt) + W} +
    \frac{w(V')^2}{w(V') + \Delta w(\opt) + \Delta w(V') - W}
~.
\]
Since $\beta < 1$ and $W \leq \Delta w(V')$ we can use
Lemma~\ref{lemma:technical} with $a = w(\opt)^2$, $b = w(V')^2$, $Y =
w(\opt)$, $Z = w(V') + \Delta w(\opt)$, $X = \Delta w(V')$, and $t =
W/X$.  Hence,
\begin{equation}
w(\maxalg)
\geq \frac{w(\opt)^2}{w(\opt) + \Delta w(V')} +
     \frac{w(V')^2}{w(V') + \Delta w(\opt)}
=    w(\opt) \cdot
     \left( \inv{1 + \Delta \beta} + \frac{\beta^2}{\beta + \Delta} \right)
~.
\label{eq:last}
\end{equation}
The upper bound of the theorem therefore follows.

To see that bound \eqref{eq:last} is tight, consider any
$\Delta$-regular bipartite graph $G=(V,E)$ with $V$ partitioned into
two sets $L$ and $R$, where $|L|=|R|$.  Set the weight of nodes in $L$
and in $R$ as $1$ and $\beta$, respectively, for some $\beta \le 1$.
Clearly, the weight of the optimal solution is $w(\opt) = |L|$.
Observe that
\[
w(\maxalg) 
= |L| \cdot \inv{1 + \Delta \beta} +
  |R| \beta \cdot \frac{\beta}{\beta + \Delta}
= w(\opt) \cdot
  \left( \frac{1}{1+ \beta\Delta} + \frac{\beta^2}{\beta + \Delta} \right)
~,
\]
matching \eqref{eq:last}.
\end{proof}

\paragraph*{Remark.}
Sakai et al.~\cite{STY03} considered the following greedy algorithm
(named GWMIN2): add the vertex $v$ maximizing $w(v)/w(N[v])$ to the
solution, remove its closed neighborhood, and recurse on the remaining
graph. They derived a $(\Delta+1)$-approximation upper bound but not a
matching lower bound.  Since their algorithm attains the bound
\eqref{eq:minsol} (see \cite{STY03}), our analysis implies that it
also attains the bound of Theorem~\ref{thm:min}.


\section{Conclusion}

It's surprising that the best distributed approximations known of
independent sets are obtained by the simplest algorithm.  Repeating
the algorithm on the remaining graph will certainly give a better
solution -- the challenge is to quantify the improvement.

%







\end{document}